\newtheorem{assumption}{Assumption}
\newtheorem{theorem}{Theorem}
\newtheorem{remark}{Remark}
\newtheorem{lemma}{Lemma}
\newtheorem{corollary}{Corollary}
\newcommand{\ubar}[1]{\underline{#1}}
\newcommand{\bigO}{\mathcal{O}}
\begin{document}
\title{Asymptotic analysis of the Friedkin-Johnsen model when the matrix of the susceptibility weights approaches the identity matrix}

\author{Alfredo~Pironti
\thanks{A. Pironti is with the Department of Electrical Engineering and Information Technology, Universit\`a degli Studi di Napoli Federico II, Via Claudio 21, 80125 Napoli, Italy, e-mail: pironti@unina.it.}
}
\maketitle

\begin{abstract}
In this paper we analyze the Friedkin-Johnsen model of opinions when the coefficients weighting the agent susceptibilities to interpersonal influence approach $1$. We will show that in this case, under suitable assumptions, the model converges to a quasi-consensus condition among the agents. In general the achieved consensus value will be different from the one obtained by the corresponding DeGroot model.
\end{abstract}



\section{Introduction}
In this paper we consider the Friedkin-Johnsen (FJ) model of opinions~\cite{Friedkin}, which is described by the equation
\begin{equation} \label{eq:Friedkin1}
y(k+1)=\Lambda W y(k)+(I-\Lambda)y_0\,,\quad y(0)=y_0\,,
\end{equation}
where we will assume that $W\in \mathds{R}^{n\times n}$ is a nonnegative row stochastic irreducible matrix, and that $\Lambda$ is a diagonal matrix whose diagonal elements $\lambda_1, \lambda_2, \dots \lambda_n$ ranges in the interval $(0, 1)$.
In this model the components of the state vector $y$ represents the opinion of individuals (\emph{agents}) on a given subject (assumed in the interval $[0, 1]$), the elements $w_{ij}$ of the matrix $W$ represents the influence accorded by the agent $i$ to the agent $j$, and as explained in~\cite{Friedkin}, each coefficient $\lambda_i$ weights the \emph{agent susceptibilities to interpersonal influence}. 

For $\Lambda=I$ the FJ model reduces to the DeGroot model~\cite{DeGroot}
\begin{equation} \label{eq:DeGroot}
y(k+1)=W y(k)\,,\quad y(0)=y_0\,,
\end{equation}
which is know, under the additional assumption of $W$ being primitive, to converge to a consensus value, dictated by the left and right Perron-Frobenius eigenvectors of the matrix $W$. This means that for $k\rightarrow\infty$ the opinion of each agent converge to the same value. In general this does not happen in the FJ model, where the asymptotic value depends on the static gain matrix $H=(I-\Lambda W)^{-1}(I-\Lambda)$.

In this paper we will show that if all the coefficients $\lambda_i$ are sufficiently close to $1$, a quasi-consensus condition is reached, or in other words the agent opinions will converge to very close values. Note that, although the fact that for $\Lambda=I$ the FJ model coincides with the DeGroot model, this is not a trivial conclusion. Indeed, for one hand, the consensus value of the DeGroot model is a consequence of the asymptotic behavior of the zero input response of system~\eqref{eq:DeGroot}, whereas in our case the asymptotic solution of the FJ model will depend on the steady state response of system~\eqref{eq:Friedkin1}; on the other hand we will show that in general the quasi-consensus value achieved in the FJ is not only different from the one predicted by the DeGroot model, but that quasi-consensus is achieved also when the $W$ matrix is not primitive. Convergence to quasi-consensus, was already noted in~\cite{Prosku} for the particular case where $\Lambda=\alpha I$, i.e. when $\lambda_1=\lambda_2=\dots=\lambda_n=\alpha\rightarrow 1^-$; in this paper we consider the case in which $\Lambda$ approaches the identity matrix along a general direction. 

The FJ model plays an important role in the modeling of opinions in social networks, we refer to the early papers~\cite{DeGroot},~\cite{Friedkin}, and to the recent tutorial~\cite{Prosku} for an introduction to this subject. A general analysis of the behavior of the FJ model from a system theoretic point of view can also be found in~\cite{Parsegov}.

The paper is organized as follows: in Section~\ref{sec:prel} some preliminary results
are given. In Section~\ref{sec:main} the main result is
provided. Finally, in Section~\ref{sec:num}, two numerical examples are provided to illustrate
how the FJ model converges at steady state to a quasi-consensus condition. Eventually, in Section VI
some concluding remarks are given.

%
\section{Notation and Preliminaries} \label{sec:prel}
Given a square matrix $A$, $\rho(A)$, $\det(A)$, $\bar\sigma(A)=\|A\|$, $\ubar\sigma(A)$, and $\lambda_i(A)$ will denote, respectively, the spectral radius, the determinant, the maximum singular value, the minimum singular value, and the $i$-th eigenvalue of $A$. $\bigO(\cdot)$ denotes the Big O Laundau symbols:
\begin{align*}
A(\epsilon)=\bigO(\epsilon)\Leftrightarrow & \,\exists\,\bar\epsilon>0\,\, \textrm{and}\,\, \exists\, M>0: \\
&\,\,\|A(\epsilon)\|\leq M\epsilon\,\, \forall\epsilon \in ]0,\bar\epsilon]\,.
\end{align*}
Finally, $\mathds{1}$ will denote the column vector with all elements equal to $1$, its dimension will be clear by the context.

Given an invertible matrix $A$ and a matrix $X$ we will make use of the following matrix equality
\begin{equation}
(A+\epsilon X)^{-1}=A^{-1}-\epsilon A^{-1}XA^{-1}+\bigO(\epsilon^2) \label{eq:expans}\,.
\end{equation}
Moreover given an invertible matrix $A$ and the vectors $u$ and $v$, the matrix $A+uv^T$ is invertible \emph{if and only if} $1+v^TA^{-1}u\neq 0$, and
\begin{equation}
(A+uv^T)^{-1}=A^{-1}-\frac{A^{-1}uv^TA^{-1}}{1+v^TA^{-1}u}\,. \label{eq:woodbury}
\end{equation}
Equation~\eqref{eq:expans} is obtained by truncating to the first order, the Taylor series of $(A+\epsilon X)^{-1}$ with respect to the scalar $\epsilon$. Whereas equation~\eqref{eq:woodbury} is the Sherman-Morrison formula~\cite{Sherman}.

Given the square matrices $A$ and $X$, then the following relationship holds locally around $\epsilon=0$ (see for example~\cite{Kato})

\begin{equation} \label{eq:eigBound1}
|\lambda_i[A+\epsilon X]-\lambda_i[A]|\leq k_{A}|\epsilon|^{1/m_i}\,,
\end{equation}
where $k_{A}$ is a positive constant, and $m_i$ is the multiplicity of $\lambda_i[A]$; if $\lambda_i[A]$ is a simple real eigenvalue then
\begin{equation} \label{eq:eigBound2}
\lambda_i[A+\epsilon X]-\lambda_i[A]=\epsilon \frac{y^T X x}{y^Tx}+\bigO(\epsilon^2)\,.
\end{equation}
where $y^T$ and $x$ are left and right eigenvectors of $A$ corresponding with the eigenvalue $\lambda_i[A]$. 

Given a matrix $A\in\mathbb{R}^{n\times n}$ with all row vectors having Euclidean norm less than a positive scalar $\beta$, then the following inequality holds
\begin{equation} \label{eq:sigmamin}
\ubar\sigma (A)\geq\left(\frac{n-1}{n\beta^2}\right)^\frac{n-1}{2}|\det(A)|\,.
\end{equation}
Inequality~\eqref{eq:sigmamin} can be immediately derived from the main result of~\cite{Hong}.

A square matrix $A$ with nonnegative elements is called row stochastic if and only if the sum of all the elements on a row is equal to $1$, irreducible if and only if the directed graph $\mathcal{G}(A)$ is strongly connected, and primitive if and only $A$ has only one eigenvalue $r$ for which $|r|=\rho(A)$ (see~\cite{Meyer}, Chap. 8, for an extensive discussion about the Perron-Frobenius theory of nonnegative matrices). 

We will use the FJ model~\eqref{eq:Friedkin1} in the slightly different form
\begin{equation} \label{eq:Friedkin2}
y(k+1)=(I-\Sigma) W y(k)+\Sigma y_0\,,\quad y(0)=y_0\,,
\end{equation}
where $\Sigma=\textrm{diag}(\sigma_1,\sigma_2,\dots,\sigma_n)=I-\Lambda$; the scalars $\sigma_i$ can be interpreted as coefficients weighting the \emph{immunity of an agent to interpersonal influence}. 

In the rest of the paper, we will make use of the following assumptions.
\begin{assumption}\label{ass:irred}
$W$ is a nonnegative row stochastic irreducible matrix.
\end{assumption}
Since $W$ is row stochastic it is simple to recognize that the vector $\mathds{1}$ is a right eigenvector with $1$ as eigenvalue. Moreover the Perron-Frobenius theory allows to conclude that $1$ is the simple eigenvalue corresponding to the Perron root of $W$, and that it is possible to choose a corresponding left eigenvector $\alpha^T$ with nonnegative components~\cite{Meyer}. In what follows we will assume that $\alpha^T$ is normalized in such a way to have $\alpha^T\mathds{1}=1$.

The following lemma states a well know fact about the convergence to a consensus value for the DeGroot model~\eqref{eq:DeGroot} (see~\cite{DeGroot}).
\begin{lemma} \label{thm:DeGroot}
If Assumption~\ref{ass:irred} holds and $W$ is a primitive matrix, then the solutions of~\eqref{eq:DeGroot} satisfy
\begin{equation*}
\lim_{k\rightarrow\infty} y(k)=\bar y = \mathds{1}\alpha^T y_0\,.
\end{equation*}
\end{lemma}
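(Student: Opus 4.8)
The plan is to reduce the problem to computing the limit of the powers of $W$. Since the solution of~\eqref{eq:DeGroot} is simply $y(k)=W^k y_0$, it suffices to show that $\lim_{k\rightarrow\infty}W^k=\mathds{1}\alpha^T$; the claimed limit $\bar y=\mathds{1}\alpha^T y_0$ then follows at once by multiplying on the right by $y_0$.

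First I would invoke the Perron--Frobenius theory under Assumption~\ref{ass:irred} together with primitivity. Because $W$ is row stochastic, $\rho(W)=1$, with $\mathds{1}$ a right eigenvector and $\alpha^T$ a left eigenvector for the eigenvalue $1$, normalized so that $\alpha^T\mathds{1}=1$. Primitivity guarantees that $1$ is the \emph{unique} eigenvalue of modulus $\rho(W)$ and that it is simple, so that every other eigenvalue $\lambda_i[W]$ satisfies $|\lambda_i[W]|<1$. This is precisely the point where primitivity is indispensable: irreducibility alone would permit additional eigenvalues on the unit circle, and the powers $W^k$ would then fail to converge.

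Next I would introduce the rank-one matrix $P=\mathds{1}\alpha^T$ and verify the algebraic identities
\begin{equation*}
P^2=P,\qquad WP=PW=P,
\end{equation*}
which follow immediately from $W\mathds{1}=\mathds{1}$, $\alpha^T W=\alpha^T$, and $\alpha^T\mathds{1}=1$. Setting $Q=W-P$, a short induction using these identities gives $Q^k=W^k-P$ for every $k\geq 1$, because the cross terms $W^kP$ and $PW$ collapse to $P$. Hence $W^k=P+Q^k$, and the convergence of $W^k$ is reduced to showing that $Q^k\rightarrow 0$.

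The key remaining step is to establish $\rho(Q)<1$. Here I would argue that $Q=W-P$ carries the same spectrum as $W$ except that the eigenvalue $1$ has been annihilated: indeed $Q\mathds{1}=W\mathds{1}-\mathds{1}(\alpha^T\mathds{1})=0$, while on the complementary invariant subspace $P$ acts as the zero map and $Q$ coincides with $W$. Since $P$ is the spectral projection associated with the simple eigenvalue $1$, the remaining eigenvalues of $Q$ are exactly the $\lambda_i[W]\neq 1$, all of modulus strictly less than $1$ by primitivity, so $\rho(Q)<1$ and therefore $Q^k\rightarrow 0$ as $k\rightarrow\infty$. Combining this with $W^k=P+Q^k$ yields $\lim_{k\rightarrow\infty}W^k=\mathds{1}\alpha^T$, and hence $\lim_{k\rightarrow\infty}y(k)=\mathds{1}\alpha^T y_0$. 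I expect the main obstacle to be making rigorous the claim that $P$ is genuinely the spectral projection onto the eigenvalue $1$ --- equivalently, that $\rho(Q)<1$ --- which is exactly where the full strength of the Perron--Frobenius theorem for primitive matrices is required.
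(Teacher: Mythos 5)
Your proof is correct, but note that the paper does not actually prove this lemma at all: it states it as a well-known fact and defers to the citation of DeGroot's original paper, so there is no ``paper's proof'' to match. Your argument is the standard spectral-projection route and it is sound: the identities $P^2=P$, $WP=PW=P$ for $P=\mathds{1}\alpha^T$ are immediate, the induction $Q^k=W^k-P$ for $Q=W-P$ goes through exactly as you say, and everything reduces to $\rho(Q)<1$. The one point you flag yourself --- that $P$ really is the spectral projection of the simple eigenvalue $1$, so that the nonzero spectrum of $Q$ consists of the eigenvalues of $W$ other than $1$ --- can be closed with a short direct argument that avoids projection language altogether: if $Qv=\mu v$ with $\mu\neq 0$, then from $PQ=PW-P^2=P-P=0$ one gets $\mu Pv=PQv=0$, hence $Pv=0$, i.e.\ $\alpha^Tv=0$; consequently $Wv=Qv+Pv=\mu v$, so $\mu$ is an eigenvalue of $W$, and $\mu\neq 1$ because the eigenspace of $1$ is spanned by $\mathds{1}$ (simplicity, from irreducibility) while $\alpha^T\mathds{1}=1\neq 0$. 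Primitivity then gives $|\mu|<1$, so $\rho(Q)<1$ and $Q^k\rightarrow 0$. With that one line inserted, your proof is complete and self-contained, which is arguably a useful addition relative to the paper's bare citation; it also makes transparent exactly where primitivity (as opposed to mere irreducibility) is used, which the paper only explains informally in the remark following the lemma.
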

\begin{remark}
Lemma~\ref{thm:DeGroot} shows that under the additional assumption of $W$ being a primitive matrix, each agent's opinion converge to the value $\alpha^Ty_0$, i.e. the final opinion of each agent is a convex combination, according to the coefficients $\alpha_i$, of the initial opinions. The assumption of $W$ being primitive is necessary to exclude the existence of other eigenvalues of $W$ located on the unit circle, which would prevent convergence for all initial conditions.
\end{remark}

\begin{assumption}\label{ass:Sigma} $\Sigma=\emph{diag}(\sigma_1,\sigma_2,\dots,\sigma_n)$, where $\sigma_i\in]0,1-\epsilon]$, with $\epsilon>0$ small. We denote with $\mathcal{S}(\epsilon)$ the set of admissible value for $\Sigma$.  
\end{assumption}

Under Assumption~\ref{ass:Sigma} we have that $\sigma_M=\max_i \sigma_i$ is always positive and less than 1. We can define the coefficients $p_i=\sigma_i/\sigma_M$ and let
\begin{equation}
\Sigma=\sigma_M\tilde{\Sigma}\,,
\end{equation}
where $\tilde{\Sigma}=\textrm{diag}(p_1,p_2,\dots,p_n)$, $\bar\sigma(\tilde\Sigma)=1$, and $\ubar \sigma(\tilde\Sigma)=\min_i p_i>0$.

The behavior of the FJ model~\eqref{eq:Friedkin2} depends on the matrix $\Sigma$, and hence on the parameters $\sigma_M$ and $\tilde \Sigma$, in what follows we will consider the matrix $\tilde \Sigma$ fixed, and we will study the behavior of the model for small values of $\sigma_M$, for this reason in what follows we will omit dependences on $\tilde \Sigma$.

Let $w_i^T$ be the $i$-th row of the matrix W, then the dynamic matrix of the FJ model~\eqref{eq:Friedkin1} is given by
\begin{equation} \label{eq:Wt}
\tilde W(\sigma_M)=(I-\Sigma)W=\begin{pmatrix}(1-\sigma_1)w_1^T\\(1-\sigma_2)w_2^T\\ \vdots \\ (1-\sigma_n) w_n^T\end{pmatrix}\,,
\end{equation}

Now we are ready to state the first result of the paper.
\begin{lemma} \label{thm:stab}
If Assumptions~\ref{ass:irred} and~\ref{ass:Sigma} holds then the dynamic matrix of the FJ model is a nonnegative irreducible matrix, with all eigenvalues inside the open unit circle. Moreover its spectral radius is a strictly decreasing function of $\sigma_M$. 
\end{lemma}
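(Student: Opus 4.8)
The plan is to treat the four claims separately, handling the first three quickly and reserving the effort for the strict monotonicity, which carries the only real content. First I would establish nonnegativity and irreducibility together from the structure displayed in \eqref{eq:Wt}. Under Assumption~\ref{ass:Sigma} each diagonal entry of $I-\Sigma$ equals $1-\sigma_i\in[\epsilon,1)$ and is therefore strictly positive, so $\tilde W(\sigma_M)$ is obtained from $W$ by scaling each row by a positive constant: entrywise, $[\tilde W(\sigma_M)]_{ij}=(1-\sigma_i)w_{ij}\geq 0$. Hence $\tilde W(\sigma_M)$ is nonnegative, and since a positive row scaling leaves the zero/nonzero pattern unchanged we have $\mathcal G(\tilde W(\sigma_M))=\mathcal G(W)$, which is strongly connected by Assumption~\ref{ass:irred}; thus $\tilde W(\sigma_M)$ is irreducible.

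Next I would bound the spectral radius using row sums. Because $W$ is row stochastic, the $i$-th row sum of $\tilde W(\sigma_M)$ is $(1-\sigma_i)\sum_j w_{ij}=1-\sigma_i$. For a nonnegative matrix the spectral radius is bounded above by the largest row sum, so
\[
\rho(\tilde W(\sigma_M))\leq \max_i(1-\sigma_i)=1-\min_i\sigma_i<1,
\]
the final inequality being strict since every $\sigma_i>0$. Equivalently, $\tilde W(\sigma_M)\leq W$ entrywise with strict inequality on the support of $W$, and as $W$ is irreducible the strict Perron-root comparison theorem for irreducible nonnegative matrices (see \cite{Meyer}) gives $\rho(\tilde W(\sigma_M))<\rho(W)=1$. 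Either route places all eigenvalues strictly inside the unit circle.

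Finally, for the strict monotonicity I would exploit the factorization $\Sigma=\sigma_M\tilde\Sigma$ with $\tilde\Sigma$ held fixed, so that $\tilde W(\sigma_M)=(I-\sigma_M\tilde\Sigma)W$. If $\sigma_M'>\sigma_M$, then because $\tilde\Sigma$ has strictly positive diagonal we have $\tilde W(\sigma_M')\leq\tilde W(\sigma_M)$ entrywise, with strict inequality in every entry where $w_{ij}>0$; both matrices being irreducible, the same comparison theorem yields $\rho(\tilde W(\sigma_M'))<\rho(\tilde W(\sigma_M))$, i.e. strict decrease. Alternatively, since irreducibility makes the Perron root a simple eigenvalue with strictly positive left and right eigenvectors $y^T$ and $x$, I would differentiate via \eqref{eq:eigBound2}: with $\partial\tilde W/\partial\sigma_M=-\tilde\Sigma W$ this gives
\[
\frac{d\rho}{d\sigma_M}=-\frac{y^T\tilde\Sigma W x}{y^Tx}<0,
\]
the sign following from $y,x>0$ and $\tilde\Sigma W$ being nonnegative and nonzero.

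I expect the main obstacle to be securing strictness rather than merely weak monotonicity in the last step. Both the comparison theorem and the derivative formula deliver a strict conclusion only by invoking the full Perron-Frobenius structure of the irreducible matrix $\tilde W(\sigma_M)$ --- strong positivity of the Perron eigenvectors for the derivative argument, and the strict version of the monotonicity theorem for the comparison argument. Some care is also needed to confirm that $\tilde W(\sigma_M)$ remains irreducible across the whole admissible range $\mathcal S(\epsilon)$, so that these tools apply at every value of $\sigma_M$.
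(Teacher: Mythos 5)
Your proposal is correct and follows essentially the same route as the paper: nonnegativity and irreducibility via positive row scaling preserving the zero pattern of $W$, then the strict entrywise domination $\tilde W(\sigma_{M1})\leq \tilde W(\sigma_{M2})\leq W$ (strict on the support of $W$) combined with the strict Perron-root comparison theorem for irreducible nonnegative matrices --- precisely Wielandt's theorem, Theorem 8.3.11 of \cite{Meyer}, which the paper invokes --- yielding both $\rho(\tilde W)<1$ and strict monotonicity in one stroke. The alternative routes you sketch (the row-sum bound for $\rho(\tilde W)<1$ and the Perron-root derivative via \eqref{eq:eigBound2}) are also valid, but your main line of argument coincides with the paper's proof.
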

\begin{proof}
Since $\sigma_i\in]0, 1[$ it is clear that all the elements of $\tilde W$ are nonnegative, moreover an element of $\tilde W$ is $0$ if and only if the corresponding element in $W$ is $0$. It follows that the irreducibility of $W$ implies the irreducibility of $\tilde W$. This proves the fact that $\tilde W$ is a nonnegative irreducible matrix.

Now, since $W$ is row stochastic and irreducible we known that its spectral radius is $1$, moreover from~\eqref{eq:Wt} it is simple to recognize that, given $\sigma_{M1}>\sigma_{M2}$, the following entrywise inequalities are satisfied  
\begin{equation} \label{eq:wie}
\tilde{W}(\sigma_{M1})\leq \tilde{W}(\sigma_{M2})\leq W\,,\quad \forall\, \Sigma\in\mathcal{S}(\epsilon)\,,
\end{equation}
where the inequalities are strict for any nonzero element of $W$.  
It follows from the Wielandt's theorem (see Theorem 8.3.11 of~\cite{Meyer}) that
\begin{equation*}
\rho(\tilde W(\sigma_{M1}))<\rho(\tilde W(\sigma_{M2}))<\rho(W)=1\,.
\end{equation*}
This completes the proof.
\end{proof}
 \begin{remark}
Lemma~\ref{thm:stab} shows that if Assumptions~\ref{ass:irred} and~\ref{ass:Sigma} hold, the FJ model is asymptotically stable, so as the steady state value
\begin{equation}
\bar y= \left[I-(I-\Sigma)W\right]^{-1}\Sigma y_0
\end{equation}
is well defined for all $\Sigma\in\mathcal{S}(\epsilon)$.  More general discussions about the convergence of the FJ model can be found in~\cite{Parsegov} and~\cite{Prosku},
\end{remark}
In the next section we will deal with the problem of characterizing the asymptotic behavior of the static gain matrix
\begin{equation}\label{eq:Hsigma}
H(\sigma_M)=\sigma_M\left[I-(I-\sigma_M\tilde\Sigma)W\right]^{-1}\tilde\Sigma\,,
\end{equation}
as $\sigma_M$ approaches $0$ from the right. Note that, although $H(\sigma_M)$ is well defined for $\sigma_M\in]0, 1[$, since the matrix $I-W$ is not invertible (having $W$ an eigenvalue in $1$), its value for $\sigma_M=0$ cannot be directly evaluated using equation~\eqref{eq:Hsigma}.
\begin{remark}
The statement of Lemma~\ref{thm:stab} holds also if we allow $\sigma_i=0$ for some, but not all, indices $i$. This can be easily proved using the Wiedlandt's theorem, and considering that inequalities~\eqref{eq:wie} continue to hold strictly on at least some elements on a row of $W$ and $\tilde W$.
\end{remark}

We close this Section with a result on a spectral property of the matrix $H(\sigma_M)$.
\begin{lemma} \label{lemma:eigen}
If Assumptions~\ref{ass:irred} and~\ref{ass:Sigma} hold then $H(\sigma_M)$ has an eigenvalue in $1$, moreover $\mathds{1}$ and $\alpha^T\Sigma(I-\Sigma)^{-1}/(\alpha^T\Sigma(I-\Sigma)^{-1}\mathds{1})$ are, respectively, two corresponding right and left eigenvectors.
\end{lemma}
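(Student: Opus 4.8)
The plan is to verify both eigenvector claims by direct computation, after first rewriting the static gain in a more convenient form. Since $\Sigma=\sigma_M\tilde\Sigma$ and $\sigma_M$ is a scalar, equation~\eqref{eq:Hsigma} simplifies to $H(\sigma_M)=A^{-1}\Sigma$, where I abbreviate $A:=I-(I-\Sigma)W$. By Lemma~\ref{thm:stab} the matrix $(I-\Sigma)W$ has spectral radius strictly less than $1$, so $A$ is invertible and $H$ is well defined; moreover, under Assumption~\ref{ass:Sigma} the diagonal matrices $\Sigma$ and $I-\Sigma$ are invertible, a fact I will use freely.

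For the right eigenvector I would show $H\mathds{1}=\mathds{1}$, which, after left-multiplying by $A$, is equivalent to $\Sigma\mathds{1}=A\mathds{1}$. This follows immediately from row stochasticity: since $W\mathds{1}=\mathds{1}$ one has $A\mathds{1}=\mathds{1}-(I-\Sigma)W\mathds{1}=\mathds{1}-(I-\Sigma)\mathds{1}=\Sigma\mathds{1}$. Hence $1$ is an eigenvalue of $H$ with right eigenvector $\mathds{1}$.

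For the left eigenvector the key observation is the factorization $A-\Sigma=(I-\Sigma)-(I-\Sigma)W=(I-\Sigma)(I-W)$. Introducing the auxiliary row vector $u^T:=\alpha^T(I-\Sigma)^{-1}$ and recalling that $\alpha^T(I-W)=0$, I compute $u^T A=u^T\Sigma+u^T(I-\Sigma)(I-W)=u^T\Sigma+\alpha^T(I-W)=u^T\Sigma$. Setting $v^T:=u^T\Sigma=\alpha^T\Sigma(I-\Sigma)^{-1}$ (the diagonal factors commute), this reads $u^T A=v^T$, i.e.\ $v^T A^{-1}=u^T$, whence $v^T H=v^T A^{-1}\Sigma=u^T\Sigma=v^T$. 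Thus $v^T$ is a left eigenvector of $H$ for the eigenvalue $1$, and dividing by the scalar $\alpha^T\Sigma(I-\Sigma)^{-1}\mathds{1}$ yields the normalized eigenvector in the statement. This scalar is strictly positive because $\alpha$ is componentwise positive by irreducibility (Assumption~\ref{ass:irred}) and the diagonal entries $\sigma_i/(1-\sigma_i)$ of $\Sigma(I-\Sigma)^{-1}$ are positive.

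There is no deep obstacle here: the argument is a verification. The only points requiring care are recognizing the factorization $A-\Sigma=(I-\Sigma)(I-W)$, which is precisely what reduces the left-eigenvector identity to the known relation $\alpha^T(I-W)=0$, and justifying each inverse that appears --- the invertibility of $A$ comes from Lemma~\ref{thm:stab}, while that of $\Sigma$ and $I-\Sigma$ comes from Assumption~\ref{ass:Sigma}. The normalization is purely cosmetic, chosen so that the left eigenvector pairs with the right eigenvector $\mathds{1}$ to give $1$.
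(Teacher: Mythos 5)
Your proof is correct and is essentially the paper's own argument in a different guise: the paper verifies the two eigenvector identities on $H^{-1}(\sigma_M)=\Sigma^{-1}\bigl(I-(I-\Sigma)W\bigr)$, which is exactly what your computations $A\mathds{1}=\Sigma\mathds{1}$ and $u^TA=v^T$ amount to after multiplying through by $\Sigma^{-1}$. The only cosmetic difference is that you clear the inverse by multiplying by $A$ rather than stating the result as an eigenvector relation for $H^{-1}$, and you spell out the positivity of the normalizing scalar, which the paper leaves implicit.
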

\begin{proof}
The inverse of $H(\sigma_M)$ is given by
\begin{equation}
H^{-1}(\sigma_M)=\Sigma^{-1}(I-W+\Sigma W)\,,
\end{equation}
It is simple to verify that
\begin{align}
H^{-1}(\sigma_M)\mathds{1}&=\mathds{1} \\
\frac{\alpha^T\Sigma(I-\Sigma)^{-1}}{\alpha^T\Sigma(I-\Sigma)^{-1}\mathds{1}}H^{-1}(\sigma_M)&=\frac{\alpha^T\Sigma(I-\Sigma)^{-1}}{\alpha^T\Sigma(I-\Sigma)^{-1}\mathds{1}}\,.
\end{align}
from which the proof follows. 
\end{proof}
\begin{remark}
The left eigenvector, defined in the statement of Lemma~\ref{lemma:eigen}, has been normalized in such a way that the sum of its components (which are non negative) is $1$.
\end{remark}

\section{Main results} \label{sec:main}
In this section we will prove our main result, i.e. that for sufficiently small $\sigma_M>0$ the FJ model~\eqref{eq:Friedkin2} asymptotically converge to a quasi-consensus value. 
We first prove that $H(\sigma_M)$ is a bounded matrix.

\begin{lemma} \label{lemma:bound}
If Assumptions~\ref{ass:irred} and~\ref{ass:Sigma} hold then $\|H(\sigma_M)\|$ is bounded.
\end{lemma}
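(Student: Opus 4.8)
The plan is to bound $\|H(\sigma_M)\|$ by controlling the smallest singular value of its inverse. Since $H(\sigma_M)$ is invertible with $H^{-1}(\sigma_M)=\Sigma^{-1}(I-W+\Sigma W)$, as computed in Lemma~\ref{lemma:eigen}, we have the identity $\|H(\sigma_M)\|=\bar\sigma(H(\sigma_M))=1/\ubar\sigma(H^{-1}(\sigma_M))$, so it suffices to show that $\ubar\sigma(H^{-1}(\sigma_M))$ stays bounded away from zero as $\sigma_M\to 0^+$. The natural tool is inequality~\eqref{eq:sigmamin} applied to $A=H^{-1}(\sigma_M)$, which reduces the problem to two estimates: an upper bound $\beta(\sigma_M)$ on the Euclidean norms of the rows of $H^{-1}(\sigma_M)$, and a lower bound on $|\det(H^{-1}(\sigma_M))|$.

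For the row norms, substituting $\Sigma=\sigma_M\tilde\Sigma$ gives $H^{-1}(\sigma_M)=\sigma_M^{-1}\tilde\Sigma^{-1}(I-W)+W$. Because $\tilde\Sigma^{-1}$ and $W$ are fixed and $\ubar\sigma(\tilde\Sigma)=\min_i p_i>0$, the dominant term is of order $\sigma_M^{-1}$, so each row has norm at most $\beta(\sigma_M)=c_1/\sigma_M$ for some constant $c_1>0$ and all small $\sigma_M$. For the determinant, I would factor $\det(H^{-1}(\sigma_M))=\det(\Sigma^{-1})\det(I-\tilde W(\sigma_M))$, where $\det(\Sigma^{-1})=\sigma_M^{-n}/\prod_i p_i$ and $\det(I-\tilde W(\sigma_M))=\prod_i(1-\lambda_i[\tilde W(\sigma_M)])$. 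The crucial point is the rate at which this last product vanishes: by the perturbation formula~\eqref{eq:eigBound2}, applied to the simple eigenvalue $1$ of $W$ with right eigenvector $\mathds{1}$ and left eigenvector $\alpha^T$ (normalized so $\alpha^T\mathds{1}=1$), the eigenvalue of $\tilde W(\sigma_M)=W-\Sigma W$ branching from $1$ satisfies $1-\lambda_1[\tilde W(\sigma_M)]=\sigma_M\,\alpha^T\tilde\Sigma\mathds{1}+\bigO(\sigma_M^2)$, with $\alpha^T\tilde\Sigma\mathds{1}=\sum_i\alpha_i p_i>0$. Since $1$ is simple, all remaining eigenvalues of $\tilde W(\sigma_M)$ converge to the eigenvalues $\lambda_i[W]\neq 1$, so the corresponding factors $1-\lambda_i[\tilde W(\sigma_M)]$ stay bounded away from zero. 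Hence $|\det(I-\tilde W(\sigma_M))|\geq c_2\,\sigma_M$ and $|\det(H^{-1}(\sigma_M))|\geq c_3\,\sigma_M^{1-n}$ for small $\sigma_M$.

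Inserting $\beta(\sigma_M)=c_1/\sigma_M$ and the determinant bound into~\eqref{eq:sigmamin} yields
\begin{equation*}
\ubar\sigma(H^{-1}(\sigma_M))\geq\left(\frac{(n-1)\sigma_M^2}{n c_1^2}\right)^{\frac{n-1}{2}}c_3\,\sigma_M^{1-n}=\left(\frac{n-1}{n c_1^2}\right)^{\frac{n-1}{2}}c_3\,,
\end{equation*}
where the factor $\sigma_M^{n-1}$ produced by the row-norm bound exactly cancels the factor $\sigma_M^{1-n}$ coming from the determinant. This leaves a strictly positive constant independent of $\sigma_M$, so $\|H(\sigma_M)\|$ is bounded as $\sigma_M\to 0^+$; it is trivially bounded on any closed subinterval of $\mathcal{S}(\epsilon)$ away from $0$ by continuity, completing the proof.

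The main obstacle is the determinant estimate: one must show that $\det(I-\tilde W(\sigma_M))$ vanishes at precisely the first order in $\sigma_M$ with a uniformly positive leading coefficient. This is where the simplicity of the Perron eigenvalue $1$ of $W$ and the explicit perturbation direction supplied by~\eqref{eq:eigBound2} are essential; note that primitivity of $W$ is \emph{not} needed, since any other Perron--Frobenius eigenvalues on the unit circle are distinct from $1$ and therefore contribute nonvanishing factors $1-\lambda_i[W]$. The exact cancellation of the powers of $\sigma_M$ then confirms that the competing blow-up of $\det(H^{-1}(\sigma_M))$ and of the row norms are of matching order.
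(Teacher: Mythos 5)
Your proof is correct and follows essentially the same route as the paper's: both reduce the problem to lower-bounding a smallest singular value via the Hong--Pan inequality~\eqref{eq:sigmamin}, and both get the determinant estimate from the identical perturbation argument (the simple Perron root contributes a factor $\sigma_M\alpha^T\tilde\Sigma\mathds{1}+\bigO(\sigma_M^2)$ via~\eqref{eq:eigBound2}, the remaining $n-1$ factors stay bounded away from zero, and primitivity of $W$ is never needed). The only difference is bookkeeping: the paper factors out $\sigma_M$ first and applies~\eqref{eq:sigmamin} to $I-(I-\Sigma)W$, whose rows are bounded by the constant $2$, whereas you apply it directly to $H^{-1}(\sigma_M)$, so the $\sigma_M^{n-1}$ produced by your row-norm bound must cancel the $\sigma_M^{1-n}$ from your determinant bound --- the same cancellation, carried out in a different place.
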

\begin{proof}
Since in each interval $\sigma_M\in[\delta,1-\delta]$, with $0<\delta<0.5$, the matrix function $\Sigma^{-1}(I-W+\Sigma W)$ is continuous and invertible with respect to $\sigma_M$, it is clear that $H(\sigma_M)$ is bounded when $\sigma_M>\delta$. It remains to prove that $H(\sigma_M)$ remains bounded when $\sigma_M\rightarrow 0^+$. 

We have
\begin{align}
\|H(\sigma_M)\|&= \bar\sigma\left[\sigma_M(I-(I-\sigma_M\tilde\Sigma)W)^{-1}\tilde \Sigma\right] \nonumber\\
&\leq\frac{\sigma_M\bar\sigma(\tilde\Sigma)}
{\ubar{\sigma}\left[I-(I-\sigma_M\tilde\Sigma)W\right]}\nonumber \\
&=\frac{\sigma_M}{\ubar{\sigma}\left[I-(I-\sigma_M\tilde\Sigma)W\right]}\,. \label{eq:Bound1}
\end{align}
We now proceed to find a lower bound for the minimum singular value appearing in equation~\eqref{eq:Bound1}, when $\sigma_M$ becomes arbitrarily small.

First of all, it is easy by direct inspection to show that each row of the matrix $I-(I-\Sigma)W$ has Euclidean norm bounded by $2$. So as, using inequality~\eqref{eq:sigmamin}, we obtain
\begin{align} \label{eq:sigmaminlow}
&\ubar{\sigma}\left[(I-(I-\Sigma)W)\right]\geq \nonumber\\ 
&\left(\frac{n-1}{4n}\right)^\frac{n-1}{2}\prod_{i=1}^n|\lambda_i(I-(I-\Sigma)W)|\,.
\end{align}

Now, consider the the eigenvalues of $(I-\Sigma)W$. They depends continuously on $\sigma_M$, so as we can associate to each of them an eigenvalue of $W$. Assuming that the eigenvalues of $W$ are ordered in increasing value of the modulus, in such a way that the Perron-Frobenius eigenvalue is the last one, and denoting with $m_i$ the multiplicity of the $i$-th eigenvalue of $W$, inequality~\eqref{eq:eigBound1} allows to write
\begin{equation}\label{eq:lbound1}
|\lambda_i[(I-\Sigma)W]-\lambda_i[W]|<k_W\sigma_M^{1/m_i}\,,
\end{equation}
where the positive constant $k_W$ depends only on $W$. From which
\begin{equation} \label{eq:eigen1}
|1-\lambda_i[(I-\Sigma)W]|=|1-\lambda_i[W]|+\bigO(\sigma_M^{1/m_i})\,,
\end{equation}
for $i=1,2,\dots,n-1$.

Using equality~\eqref{eq:eigBound2}, we also obtain
\begin{align}
1-\lambda_n[(I-\Sigma)W]&=\frac{\alpha^T\tilde\Sigma W\mathds{1}}{\alpha^T\mathds{1}}\sigma_M+\bigO(\sigma_M^2) \nonumber \\
&=\sigma_M\alpha^T\tilde\Sigma\mathds{1}+\bigO(\sigma_M^2)\,.\label{eq:eigen2}
\end{align}

Now define
\begin{align*}
d_i&=|1-\lambda_i[W]|\,,\quad i=1,2,\dots,n-1 \\
d_0&=\min_{i=1,2,\dots,n-1} \,d_i \\
m_0&=\max_{i=1,2,\dots,n-1}\, m_i\,.
\end{align*}
Since the first $n-1$ eigenvalues of $W$ are distinct from $1$, $d_0$ is a positive constant depending only on $W$.

Putting together equations from~\eqref{eq:sigmaminlow} to~\eqref{eq:eigen2}, we obtain  
\begin{align}
&\ubar{\sigma}\left[I-(I-\Sigma)W\right]\geq \nonumber\\
&\left(\frac{n-1}{4n}\right)^\frac{n-1}{2}\sigma_M d_0^{n-1}\alpha_T\tilde\Sigma\mathds{1} +\bigO\left(\sigma_M^{(m_0+1)/m_0}\right)\,. \label{eq:Bound2}
\end{align}

From which, letting
\begin{equation*}
c_W=\left(\frac{n-1}{4n}\right)^\frac{n-1}{2}d_0^{n-1}\,,
\end{equation*}
it is possible to write
\begin{align*}
\|H(\sigma_M)\|\leq&\frac{\sigma_M}{c_W\sigma_M\alpha^T\tilde\Sigma\mathds{1}+\bigO(\sigma_M^{(m_0+1)/m_0})}\\
&\mathop{\longrightarrow}^{\sigma_M\rightarrow 0^+} \frac{1}{c_W\alpha^T\tilde\Sigma\mathds{1}}\,,
\end{align*}
which shows the boundedness of $\|H(\sigma_M)\|$ also for arbitrarily small values of $\sigma_M$.
\end{proof}

Next step is to show that for $\sigma_M$ sufficiently close to zero the matrix $H$ can be approximated by a given rank one matrix. In order to introduce in a simple way this approximation, we first derive it by means of not fully rigorous arguments.   

Consider equation~\eqref{eq:Hsigma}, from which we obtain
\begin{equation*}
\left[I-(I-\sigma_M\tilde\Sigma)W\right]H(\sigma_M) =\sigma_M\tilde\Sigma\,,
\end{equation*}
and
\begin{equation*}
H(\sigma_M)-WH(\sigma_M)+\sigma_M[\tilde\Sigma WH(\sigma_M)-\tilde\Sigma]=0\,.
\end{equation*}
Hence, being $H(\sigma_M)$ a bounded matrix, we can write
\begin{equation} \label{eq:HsigmaEq}
H(\sigma_M)-WH(\sigma_M)=\bigO(\sigma_M)\,.
\end{equation}

Now, let us assume that there exists a matrix $\bar H$ such that
\begin{equation}
\bar H = \lim_{\sigma_M\rightarrow0^+}H(\sigma_M)\,.
\end{equation}
Then $\bar H$ has to satisfy the linear equation
\begin{equation}\label{eq:barHeq}
\bar H = W\bar H\,.
\end{equation}
It follows that the columns of $\bar H$ have to be right eigenvectors of the eigenvalue in $1$ of $W$, i.e. $\bar H$ has to be in the form
\begin{equation*}
\bar H=\begin{pmatrix} l_1\mathds{1} &l_2\mathds{1}& \dots & l_n\mathds{1}\end{pmatrix}=\mathds{1}l^T\,,
\end{equation*}
where $l_i$, $i=1,2,\dots,n$ are suitable scalars, and $l^T=\begin{pmatrix}
l_1 & l_2 &\dots & l_n\end{pmatrix}$.

On the other hand we already know that for every $0<\sigma_M<1$, $H(\sigma_M)$ has $\mathds{1}$ as a right eigenvector, and ${\alpha^T\Sigma(I-\Sigma)^{-1}}/{(\alpha^T\Sigma(I-\Sigma)^{-1}\mathds{1})}\approx {\alpha^T\tilde\Sigma}/{(\alpha^T\tilde\Sigma\mathds{1})} $ as a left eigenvector (Lemma~\ref{lemma:eigen}), so as we can expect that the same holds also for $\bar H$. As a consequence the following equalities are also expected to hold
\begin{align*}
&\bar H\mathds{1}=\mathds{1}l^T\mathds{1}=\mathds{1}\Rightarrow\sum_{i=1}^n l_i=1\,. \\
&\frac{\alpha^T\tilde\Sigma}{\alpha^T\tilde\Sigma\mathds{1}}\mathds{1}l^T=\frac{\alpha^T\tilde\Sigma}{\alpha^T\tilde\Sigma\mathds{1}}\,,
\end{align*}
from which
\begin{equation*}
l^T=\frac{\alpha^T\tilde\Sigma}{\alpha^T\tilde\Sigma\mathds{1}}\,.
\end{equation*}
In conclusions if $H(\sigma_M)$ can be approximated by a constant matrix around $\sigma_M=0$, we expect that this matrix is
\begin{equation}\label{eq:barH}
\bar H = \mathds{1}\frac{\alpha^T\tilde\Sigma}{\alpha^T\tilde\Sigma\mathds{1}}\,.
\end{equation}

Next theorem which is the main result of the paper show that this is, indeed, true.
\begin{theorem} \label{thm:main}
If Assumptions~\ref{ass:irred} and~\ref{ass:Sigma} hold then 
\begin{equation}
H(\sigma_M)=\mathds{1}\frac{\alpha^T\tilde\Sigma}{\alpha^T\tilde\Sigma\mathds{1}}+\bigO(\sigma_M)
\end{equation}
\end{theorem}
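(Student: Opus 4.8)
The plan is to combine the boundedness of $H(\sigma_M)$ (Lemma~\ref{lemma:bound}) with the structural identity~\eqref{eq:HsigmaEq} to first establish that, to leading order, $H(\sigma_M)$ is a rank-one matrix with left factor $\mathds{1}$, and then to identify the corresponding right factor using the exact left eigenvector supplied by Lemma~\ref{lemma:eigen}. I would set $P = \mathds{1}\alpha^T$ and $Q = I - P$; since $W\mathds{1} = \mathds{1}$, $\alpha^T W = \alpha^T$ and $\alpha^T\mathds{1} = 1$, $P$ is the spectral projection of $W$ onto its simple Perron eigenvalue $1$, so that $(I-W)\mathds{1} = 0$, $\ker(I-W) = \operatorname{span}\{\mathds{1}\}$, $\operatorname{range}(Q) = \ker(\alpha^T)$, and $I-W$ restricts to a boundedly invertible map of $\ker(\alpha^T)$ onto itself.

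For the first step I would rewrite~\eqref{eq:HsigmaEq} as $(I-W)H(\sigma_M) = \bigO(\sigma_M)$; this rearrangement is legitimate precisely because $\|H(\sigma_M)\|$ stays bounded as $\sigma_M\to 0^+$. Using $(I-W)Q = I-W$, the matrix $QH(\sigma_M)$ has its columns in $\ker(\alpha^T)$ and satisfies $(I-W)\,QH(\sigma_M) = \bigO(\sigma_M)$; applying the inverse of $I-W$ on $\ker(\alpha^T)$, whose norm is a fixed $\sigma_M$-independent constant, then gives $QH(\sigma_M) = \bigO(\sigma_M)$. Hence $H(\sigma_M) = PH(\sigma_M) + QH(\sigma_M) = \mathds{1}\,g(\sigma_M)^T + \bigO(\sigma_M)$, where $g(\sigma_M)^T := \alpha^T H(\sigma_M)$ is a bounded row vector; that is, $H(\sigma_M)$ already has the correct rank-one leading structure, and only its row factor remains to be pinned down.

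For the second step, let $\ell^T$ denote the normalized left eigenvector from Lemma~\ref{lemma:eigen}, so that $\ell^T H(\sigma_M) = \ell^T$ and $\ell^T\mathds{1} = 1$. Substituting the rank-one form of the first step yields $\ell^T H(\sigma_M) = (\ell^T\mathds{1})\,g(\sigma_M)^T + \bigO(\sigma_M) = g(\sigma_M)^T + \bigO(\sigma_M)$, so $g(\sigma_M)^T = \ell^T + \bigO(\sigma_M)$. A short expansion of $\ell^T$ via~\eqref{eq:expans} with $\Sigma = \sigma_M\tilde\Sigma$ gives $\Sigma(I-\Sigma)^{-1} = \sigma_M\tilde\Sigma + \bigO(\sigma_M^2)$, and since $\alpha^T\tilde\Sigma\mathds{1} > 0$ is a fixed positive constant this leads to $\ell^T = \alpha^T\tilde\Sigma/(\alpha^T\tilde\Sigma\mathds{1}) + \bigO(\sigma_M)$. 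Combining the two displays identifies $g(\sigma_M)^T$ with $\alpha^T\tilde\Sigma/(\alpha^T\tilde\Sigma\mathds{1})$ up to $\bigO(\sigma_M)$, and therefore $H(\sigma_M) = \mathds{1}\,\alpha^T\tilde\Sigma/(\alpha^T\tilde\Sigma\mathds{1}) + \bigO(\sigma_M) = \bar H + \bigO(\sigma_M)$, as claimed in~\eqref{eq:barH}.

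I expect the main obstacle to be the first step, and specifically the singularity of $I-W$ --- which is exactly the reason the naive Neumann/Taylor expansion~\eqref{eq:expans} of $[I-(I-\sigma_M\tilde\Sigma)W]^{-1}$ cannot be applied directly around $\sigma_M = 0$. The delicate point is to quarantine the one-dimensional kernel $\operatorname{span}\{\mathds{1}\}$ through the projection $P$ and to invoke that $1$ is a \emph{simple} eigenvalue of $W$ (which follows from Assumption~\ref{ass:irred} via Perron--Frobenius), so that $I-W$ is boundedly invertible on $\ker(\alpha^T)$; it is the boundedness of $H(\sigma_M)$ from Lemma~\ref{lemma:bound} that makes the passage from~\eqref{eq:HsigmaEq} to $(I-W)H(\sigma_M) = \bigO(\sigma_M)$ rigorous. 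Once this rank-one structure is secured, the identification of the row factor reduces to elementary bookkeeping driven by Lemma~\ref{lemma:eigen}.
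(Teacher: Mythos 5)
Your proof is correct, but it takes a genuinely different route from the paper's. The paper proceeds by explicit inversion: it introduces the auxiliary rank-one matrix $\tilde H=\mathds{1}\,\alpha^T\tilde\Sigma(I-\Sigma)^{-1}/(\alpha^T\tilde\Sigma\mathds{1})$, verifies via the Sherman--Morrison formula~\eqref{eq:woodbury} (together with $H^{-1}(\sigma_M)\mathds{1}=\mathds{1}$ from Lemma~\ref{lemma:eigen}) that $H(\sigma_M)-\tilde H$ is invertible for small $\sigma_M>0$, expands $(H(\sigma_M)-\tilde H)^{-1}=\frac{1}{\sigma_M}\left(A+\sigma_M B\right)+\bigO(\sigma_M)$ with $A$ invertible, inverts back using~\eqref{eq:expans} to get $H(\sigma_M)-\tilde H=\bigO(\sigma_M)$, and finally checks $\tilde H-\bar H=\bigO(\sigma_M)$. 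You instead make rigorous the heuristic argument that the paper presents only informally before the theorem: starting from $(I-W)H(\sigma_M)=\bigO(\sigma_M)$ (which, as you correctly stress, is where Lemma~\ref{lemma:bound} enters), you annihilate the component $QH(\sigma_M)$ by inverting $I-W$ on the $W$-invariant complement $\ker(\alpha^T)$ of the simple Perron direction, and then pin down the row factor with the exact left eigenvector of Lemma~\ref{lemma:eigen}. Your route is conceptually cleaner --- essentially a Fredholm-alternative argument built on the spectral projection $P=\mathds{1}\alpha^T$ --- and it explains \emph{why} the limit must be $\bar H$ without ever guessing it; but it uses the boundedness lemma as an essential ingredient, whereas the paper's computation never invokes Lemma~\ref{lemma:bound} (boundedness falls out as a by-product) and yields more quantitative information, namely the explicit first-order coefficient $A^{-1}$ in $H(\sigma_M)-\tilde H=\sigma_M A^{-1}+\bigO(\sigma_M^2)$, from which higher-order expansions could be extracted. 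One detail you should make explicit: when you apply the inverse of $I-W$ restricted to $\ker(\alpha^T)$, the right-hand side $(I-W)\,QH(\sigma_M)$ must itself lie columnwise in $\ker(\alpha^T)$; this holds because $\alpha^T(I-W)=0$, so it is a one-line check rather than a gap, but the pull-back is only legitimate once it is stated.
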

\begin{proof}
For the sake of conciseness let us introduce the following quantities
\begin{align*}
q_1&=\alpha^T\tilde\Sigma\mathds{1}>0\\
q_2&=\alpha^T\tilde\Sigma^2\mathds{1}>0\\
h&=\alpha^T\tilde\Sigma(I-\Sigma)^{-1}\mathds{1}>0\,.
\end{align*}
Now consider the following matrix
\begin{equation*}
\tilde H=\mathds{1}\frac{\alpha^T\tilde\Sigma (I-\Sigma)^{-1}}{\alpha^T\tilde\Sigma\mathds{1}}=\mathds{1}\frac{\alpha^T\tilde\Sigma (I-\Sigma)^{-1}}{q_1}\,.
\end{equation*}

First we prove that $H(\sigma_M)-\tilde H$ is invertible for sufficiently small values of $\sigma_M>0$. Indeed, $\tilde H$ is a rank one matrix, and for sufficiently small positive values of $\sigma_M$ we have
\begin{align*}
\alpha^T\tilde\Sigma(I-\Sigma)^{-1}H(\sigma_M)^{-1}\mathds{1}&=h\\
&=\sum_{i=1}^n\frac{\alpha_ip_i}{1-\sigma_i}\\
&>\sum_{i=1}^n{\alpha_ip_i}=q_1\,,
\end{align*}
from which
\begin{equation*}
1-\displaystyle\frac{\alpha^T\tilde\Sigma(I-\Sigma)^{-1}H(\sigma_M)^{-1}\mathds{1}}{q_1}=1-\frac{h}{q_1}<0\,.
\end{equation*}
It follows that for positive sufficiently small values of $\sigma_M$, it is possible to apply the Sherman-Morrison formula~\eqref{eq:woodbury}, obtaining
\begin{align*}
(H(\sigma_M)-\tilde H)^{-1}&=H^{-1}(\sigma_M)+\\
&\frac{H^{-1}(\sigma_M)\mathds{1}\alpha^T\tilde\Sigma(I-\Sigma)^{-1}H^{-1}(\sigma_M)}{q_1(1-\displaystyle\frac{\alpha^T\tilde\Sigma(I-\Sigma)^{-1}H(\sigma_M)^{-1}\mathds{1}}{q_1})}\\
&=H^{-1}(\sigma_M)+\mathds{1}\alpha^T\tilde\Sigma\displaystyle\frac{I-\sigma_M\tilde\Sigma+\bigO(\sigma_M^2)}{\sigma_Mq_2+\bigO(\sigma_M^2)}\,.
\end{align*}
Moreover
\begin{align*}
H^{-1}(\sigma_M)=\frac{1}{\sigma_M}(\tilde\Sigma^{-1}(I-W))+ W\,,
\end{align*}
from which
\begin{equation*}
(H(\sigma_M)-\tilde H)^{-1}=\frac{1}{\sigma_M}\left( A+\sigma_M B\right)+\bigO(\sigma_M)\,,
\end{equation*}
where
\begin{align*}
A&=\tilde\Sigma^{-1}(I-W)+\frac{1}{q_2}\mathds{1}\alpha^T\tilde\Sigma\\
B&= W- \frac{1}{q_2}\mathds{1}\alpha^T\tilde\Sigma^2\,.
\end{align*}
Now, $A$ is an invertible matrix, this can be explained with the fact that $\tilde\Sigma^{-1}(I-W)$ has a single eigenvalue in $0$ to whom corresponds as a right and a left eigenvectors, respectively, $\mathds{1}$, and $\alpha^T\tilde\Sigma$. It follows that the eigenvalues of $A$ are $q_1/q_2$ and the nonzero eigenvalues of $\tilde\Sigma^{-1}(I-W)$.

Using~\eqref{eq:expans}, we obtain
\begin{align*}
H(\sigma_M)-\tilde H&=\left[(H(\sigma_M)-\tilde H)^{-1}\right]^{-1} \\
&=\sigma_M\left(A^{-1}-\sigma_MA^{-1}BA^{-1}+\bigO(\sigma_M)\right)\,,
\end{align*}
from which
\begin{equation*}
H(\sigma)=\tilde{H}+\bigO(\sigma)\,.
\end{equation*}
Now it is simple to prove that
\begin{align*}
\tilde H- \bar H&=\frac{\mathds{1}}{q_1}\alpha^T\tilde\Sigma\left(I-(I-\Sigma)^{-1}\right)
\\&=\frac{\mathds{1}}{q_1}\alpha^T\tilde\Sigma\left(\sigma_M\tilde\Sigma+\bigO(\sigma_M^2))\right)=\bigO(\sigma_M)
\end{align*}
So as we can also write
\begin{equation*}
H(\sigma)=\bar H + \bigO(\sigma_M)\,,
\end{equation*}
which concludes the proof.

Note that the initial use of the matrix $\tilde H$ instead of $\bar H$ is due to avoid a singularity in equality~\eqref{eq:woodbury}
\end{proof}

\begin{corollary} \label{cor:main}
If Assumptions~\ref{ass:irred} and~\ref{ass:Sigma} hold, the FJ model~\eqref{eq:Friedkin2} steady state value is given by
\begin{equation*}
\bar y= \lim_{k\rightarrow \infty} y(k)=\bar H y_0 + \bigO(\sigma_M)\,.
\end{equation*}
As a consequence, given two components $i$ and $j$ of $\bar y$, the following quasi-consensus condition is also verified
\begin{align*}
\exists\, \bar\sigma_M>0\,\, \emph{and}\,\,&\exists\,M>0:\\
&\,\,|\bar y_i -\bar y_j|<M\sigma_M \,,\,\forall\sigma_M\in]0,\bar\sigma_M]\,.
\end{align*}
\end{corollary}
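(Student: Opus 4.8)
The plan is to combine the asymptotic stability established in Lemma~\ref{thm:stab} with the rank-one approximation of the static gain matrix proved in Theorem~\ref{thm:main}, and then to exploit the special structure of the limiting matrix $\bar H$.

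First I would recall that, by Lemma~\ref{thm:stab}, under Assumptions~\ref{ass:irred} and~\ref{ass:Sigma} the dynamic matrix $(I-\Sigma)W$ has spectral radius strictly less than $1$, so the FJ model~\eqref{eq:Friedkin2} is asymptotically stable and its solution converges, for every initial condition $y_0$, to the unique fixed point
\begin{equation*}
\bar y = \lim_{k\rightarrow\infty} y(k) = \left[I-(I-\Sigma)W\right]^{-1}\Sigma\, y_0 = H(\sigma_M)\,y_0\,,
\end{equation*}
where the last equality uses $\Sigma=\sigma_M\tilde\Sigma$ together with definition~\eqref{eq:Hsigma}. This identifies the steady state with the action of the static gain matrix on $y_0$. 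Next I would invoke Theorem~\ref{thm:main}, which gives $H(\sigma_M)=\bar H+\bigO(\sigma_M)$ with $\bar H=\mathds{1}\,\alpha^T\tilde\Sigma/(\alpha^T\tilde\Sigma\mathds{1})$. Since $y_0$ is a fixed vector, multiplying a matrix that is $\bigO(\sigma_M)$ by $y_0$ yields a vector that is $\bigO(\sigma_M)$; hence
\begin{equation*}
\bar y = H(\sigma_M)\,y_0 = \bar H\,y_0+\bigO(\sigma_M)\,,
\end{equation*}
which is exactly the first claim of the corollary.

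For the quasi-consensus bound I would exploit the rank-one structure of $\bar H$. Writing $l^T=\alpha^T\tilde\Sigma/(\alpha^T\tilde\Sigma\mathds{1})$, we have $\bar H=\mathds{1}l^T$, so $\bar H\,y_0=(l^Ty_0)\mathds{1}$; that is, every component of $\bar H y_0$ equals the \emph{same} scalar $l^Ty_0$. Denoting by $r(\sigma_M)=\bar y-(l^Ty_0)\mathds{1}$ the residual, the first part gives $r(\sigma_M)=\bigO(\sigma_M)$, so by the definition of the Big~O symbol in Section~\ref{sec:prel} there exist $\bar\sigma_M>0$ and $M'>0$ with $\|r(\sigma_M)\|\le M'\sigma_M$ for all $\sigma_M\in]0,\bar\sigma_M]$. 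Because the common scalar $l^Ty_0$ cancels in the difference of two components, and each component is bounded in modulus by the Euclidean norm,
\begin{equation*}
|\bar y_i-\bar y_j| = |r_i(\sigma_M)-r_j(\sigma_M)| \le 2\|r(\sigma_M)\| \le 2M'\sigma_M < M\sigma_M\,,
\end{equation*}
and choosing $M>2M'$ (for instance $M=2M'+1$) yields the strict inequality claimed.

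I do not expect any genuinely hard step here, since the analytical effort is already absorbed into Lemma~\ref{thm:stab} and Theorem~\ref{thm:main}. The one point deserving care is the passage from the matrix-level estimate $H(\sigma_M)=\bar H+\bigO(\sigma_M)$ to the scalar component-wise inequality: this requires unpacking the definition of $\bigO(\cdot)$ to produce explicit constants $\bar\sigma_M$ and $M$, and crucially using the rank-one identity $\bar H=\mathds{1}l^T$ so that the dominant order-one term is identical across all components and cancels in the difference, leaving only the $\bigO(\sigma_M)$ residual.
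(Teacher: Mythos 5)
Your proposal is correct and follows exactly the route the paper intends: the paper's own proof is the one-line "immediate consequence of Theorem~\ref{thm:main}," and your argument simply makes explicit the steps it leaves implicit (steady state equals $H(\sigma_M)y_0$ by the stability from Lemma~\ref{thm:stab}, the matrix estimate from Theorem~\ref{thm:main} applied to the fixed vector $y_0$, and the cancellation of the common component $l^Ty_0$ coming from the rank-one structure of $\bar H$). No gaps; the unpacking of the $\bigO(\cdot)$ definition into explicit constants $\bar\sigma_M$, $M$ is exactly what "immediate" means here.
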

\begin{proof} It is an immediate consequence of Theorem~\ref{thm:main}.
\end{proof}
\begin{remark}
Theorem~\ref{thm:main} was already proved in~\cite{Prosku}  (see Lemma~24) for the case where $\tilde\Sigma=I$. 
\end{remark}

Denoting with $y(\sigma_M,k)$ the solutions of equation~\eqref{eq:Friedkin2}, and using Corollary~\ref{cor:main}, we can write
\begin{equation} \label{eq:lim}
\lim_{\sigma_M\rightarrow 0^+} \,\lim_{k\rightarrow \infty} y(\sigma_M,k)=\bar Hy_0=\frac{\mathds{1}\alpha^T\tilde\Sigma}{\alpha^T\tilde\Sigma\mathds{1}}y_0\,,
\end{equation}
whereas, we know that for any finite $k$ we have
\begin{equation} \label{eq:lim1}
\lim_{\sigma_M\rightarrow 0^+} y(\sigma_M,k)=y(0,k)=y_G(k)=W^k y_0\,.
\end{equation}
where $y_G(k)$ is the solution of the DeGroot model~\eqref{eq:DeGroot}.
If $W$ is primitive then
\begin{equation} \label{eq:lim2}
\lim_{k\rightarrow \infty}\, \lim_{\sigma_M\rightarrow 0^+}  y(\sigma_M,k)=\mathds{1}\alpha^Ty_0\,.
\end{equation} 
It follows that the two limits in~\eqref{eq:lim} cannot in general be inverted. On the other hand for sufficiently small value of $\sigma_M$ and for finite value of $k$ the solutions of the FJ and DeGroot models will be close.

\begin{remark} \label{rem:timescale}
In the proof of Theorem~\ref{thm:main} we established that for sufficiently small value of $\sigma_M$ the eigenvalues of the dynamic matrix $\tilde W$ are very close to the eigenvalues of $W$. If $W$ is primitive and $\sigma_M$ is sufficiently small we can therefore conclude that the evolution of the solutions of the FJ model develop on two time scales. In the first part of the evolution the state evolve following the dynamics of the DeGroot model, reaching eventually an approximate first opinions consensus on the value given by $\mathds{1}\alpha^Ty_0$, then on a longer time scale (dictated by the Perron-Froebenius eigenvalue of $\tilde W$, the agent opinions shifts toward the final value $\mathds{1}\alpha^T\tilde\Sigma y_0/(\alpha^T\tilde\Sigma\mathds{1})$.

In other words it can happen that the agent opinions first reach a consensus based on the DeGroot model, then the consensus slowly shifts on a different value depending on the matrix $\tilde\Sigma$.   
\end{remark}

\section{Numerical example} \label{sec:num}
In this section we will present two examples; since their scope is only to show the results of this paper from a numerical point of view, these examples are not linked to any particular applications.

Consider the FJ model~\eqref{eq:Friedkin2}, with the following parameters
\begin{subequations} \label{eq:ex1}
\begin{align}
W&=\begin{pmatrix}
0.8 &   0.10 &    0.05  &  0.05 \\
    0.30   &  0.40  &   0.20  &   0.10 \\
    0.10   &  0.10   &  0.60  &  0.20 \\
    0.10  &  0.30  &   0.30  &   0.30
\end{pmatrix}\\
\tilde\Sigma&=\textrm{diag}
\begin{pmatrix}
0.5 & 1 & 0.2 & 0.1\end{pmatrix}\,.
\end{align}
\end{subequations}
clearly, for $\sigma_M\in]0,1[$, the matrices $W$ and $\Sigma=\sigma_M\tilde \Sigma$  satisfy Assumptions~\ref{ass:irred} and~\ref{ass:Sigma}. 

Table~\ref{tab:tab1} quantifies the maximum distance between the elements of the steady state response for different values of $\sigma_M$, when $y_0=\begin{pmatrix}
0.20  & 0.50 & 0.01 & 0.29 \end{pmatrix}^T$. The last column shows the $95\%$ settling time. Clearly, as $\sigma_M$ approaches $0$, the spectral radius of $\tilde W$ approaches $1$, and the speed of convergence increases approximately with the reciprocal of $\sigma_M$. Figures~\ref{fig:Fig1} and~\ref{fig:Fig1a} show, respectively, the time behavior of $y$ when $\sigma_M=0.05$, and when $\sigma_M=0.01$. 

In these figures, according to Remark~\ref{rem:timescale}, it is possible to notice  two timescales along which the behavior of the FJ model solution develops; first all the elements of $y$ move towards the consensus value of the DeGroot model ($0.22$ in this case), then they slowly converge to the steady state values (close to the quasi-consensus value of $0.30$). Figure~\ref{fig:comparison} show a comparison between the FJ and DeGroot model solutions for the first time steps when $\sigma_M=0.01$.

\renewcommand{\arraystretch}{1.2}
\begin{table} \label{tab:tab1}
\begin{center}
\caption{Numerical results for the case where the FJ model matrices are defined in equation~\eqref{eq:ex1}.}
\begin{tabular}{ |c | c | c | c| }
    \hline
    $\sigma_M$ & $\|H-\bar H\|/\|H\|$ & $\max |\bar y_i-\bar y_j|$ & $T_a$ \\
    \hline
    $0.2$ & $2.3\times 10^{-1}$ & $7.0\times 10^{-2}$ & $22$ \\ \hline
    $0.05$ & $6.2\times 10^{-2}$ & $1.9\times 10^{-2}$ & 76 \\ \hline
    $0.01$ & $1.3\times 10^{-2}$ & $3.8\times 10^{-3}$ & 361  \\ \hline
    $0.001$ &$1.3\times 10^{-3}$  & $3.8\times 10^{-4}$& 3572 \\
    \hline
  \end{tabular}
\end{center}
\end{table}

\begin{figure}[thb]
\begin{center}
\includegraphics[width=7.5cm]{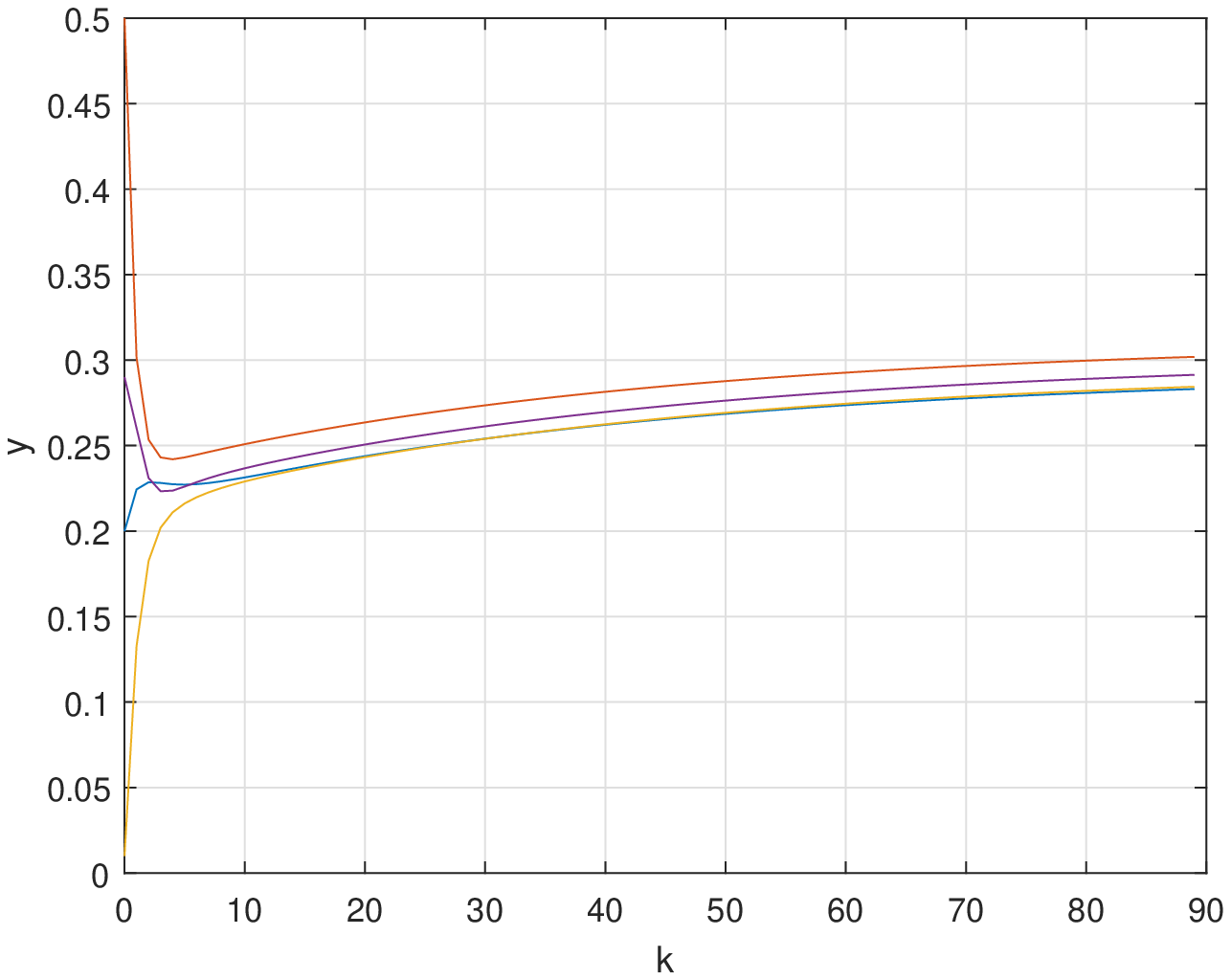}
\caption{Behavior of the FJ model with the matrices specified in equation~\eqref{eq:ex1} when $\sigma_M=0.05$.}
\label{fig:Fig1}
\end{center}
\end{figure}

\begin{figure}[thb]
\begin{center}
\includegraphics[width=7.5cm]{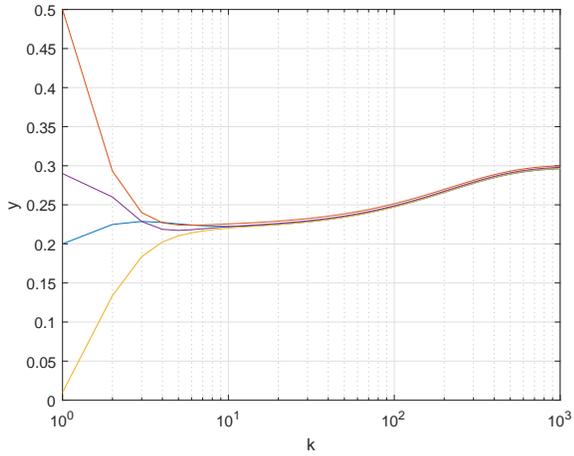}
\caption{Behavior of the FJ model with the matrices specified in equation~\eqref{eq:ex1} when $\sigma_M=0.01$, a logarithmic scale is used for the x-axis.}
\label{fig:Fig1a}
\end{center}
\end{figure}

\begin{figure}[thb]
\begin{center}
\includegraphics[width=7.5cm]{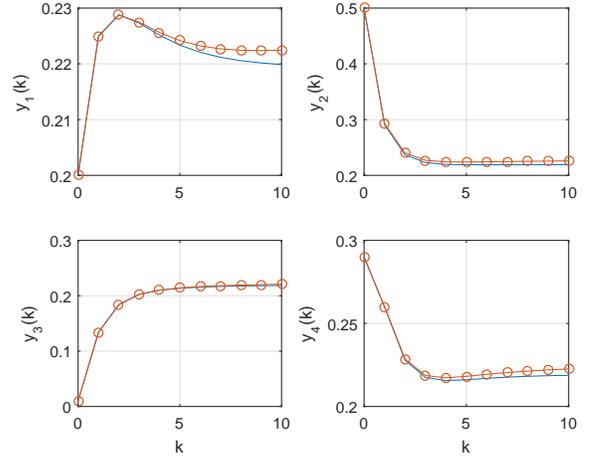}
\caption{Comparison between the solutions of the FJ (solid line with circles) and DeGroot (solid lines) models for $k=0,2,\dots,10$ ($\sigma_M=0.01$).} 
\label{fig:comparison}
\end{center}
\end{figure}
To complete the analysis, we also considered a case where the matrix $W$ is not primitive. Since $W$, apart from $1$, has other eigenvalues on the unit circle, the corresponding DeGroot model does not converge for all initial conditions. 
\begin{subequations} \label{eq:ex2}
\begin{align}
W&=\begin{pmatrix}
0 & 1 & 0 & 0 \\
    \frac{2}{3} & 0 & \frac{1}{3} &0 \\
    0 &\frac{1}{3} &0& \frac{2}{3} \\
    0 &0& 1& 0
\end{pmatrix}\\
\tilde\Sigma&=\textrm{diag}
\begin{pmatrix}
0.5 & 1 & 0.2 & 0.1\end{pmatrix}
\end{align}
\end{subequations}

Figure~\ref{fig:fignonprim} shows the time behavior of $y$ when $W$ and $\tilde \Sigma$ are defined as in equation~\eqref{eq:ex2}, and $\sigma_M=0.1$. In this case the eigenvalues of $W$ are in $\{-1, -0.68, 0.68, 1\}$, and the quasi-consensus value is $0.37$.

\begin{figure}[thb]
\begin{center}
\includegraphics[width=7.5cm]{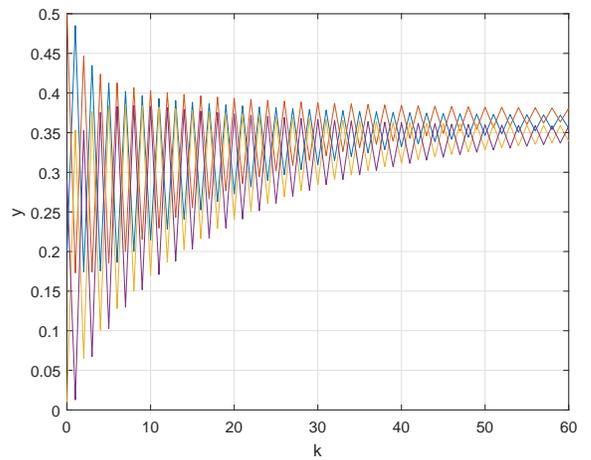}
\caption{Behavior of the FJ model with the matrices specified in equation~\eqref{eq:ex2} when $\sigma_M=0.1$}
\label{fig:fignonprim}
\end{center}
\end{figure}

\section{Conclusion} \label{sec:concl}
In this paper we made a detailed analysis of the asymptotic behavior of the Friedkin-Johnsen model of opinion dynamics when the coefficients weighting the agent susceptibilities to interpersonal influence approach $1$ along a general direction. We show that under suitable assumptions, if these weights are sufficiently close to $1$, then the agent opinions converge to very close values, in other words a quasi-consensus condition is achieved.



\begin{thebibliography}{1}
\bibitem{Friedkin}
N.~E.~Friedkin and E.~C.~Jonsen, Social Influence Networks and Opinion Change, in \emph{Advance in Group Processes}, E.~J.~Lawler and M.~W.~Macy, eds., vol. 16, JAI Press, 1999, pp.1--29.

\bibitem{DeGroot}
M.~H.~DeGroot, ``Reaching a Consensus,'' \emph{Journal of the American
Statistical Association}, vol. 69, no. 345, pp. 118–121, 1974.

\bibitem{Parsegov}
S.~E.~Parsegov, A.~V.~Proskurnikov, R.~Tempo, N.~E.~Friedkin, ``Novel Multidimensional Models of Opinion Dynamics in Social Networks,'' \emph{IEEE Transactions on Automatic Control}, vol. 62, no.~5, pp.~2270--2285, 2017.

\bibitem{Prosku}
A.~V.~Proskurnikov, R.~Tempo, ``A Tutorial on Modeling and Analysis of Dynamic Social Networks. Part I'', \emph{Annual Reviews in Control}, vol. 43, pp. 65–79, 2017.


\bibitem{Sherman}
J.~Sherman, W.~J.~Morrison, ``Adjustment of an Inverse Matrix Corresponding to a Change in One Element of a Given Matrix,'' \emph{Annals of Mathematical Statistics}, 21 (1): 124–127, 1950.


\bibitem{Kato}
T.~Kato, \emph{Perturbation Theory for Linear Operators}, Springer-Verlag Berlin Heidelberg, 1995.

\bibitem{Hong}
Y.~P.~Hong,  C.~T.~Pan,  ``A lower bound for the smallest singular value,'', \emph{Linear Algebra Appl}, 172:27-32, 1992.

\bibitem{Meyer}
C.~Meyer, \emph{Matrix Analysis and Applied Linear Algebra}, Society for Industrial and Applied Mathematics Philadelphia, PA, USA, 2000. 

\end{thebibliography}
\end{document}